\theoremstyle{plain}
\newtheorem*{thm}{Theorem}
\theoremstyle{definition}
\newtheorem*{defn}{Definition}
\newtheoremstyle{named}{}{}{\itshape}{}{\bfseries}{.}{.5em}{#3}
\theoremstyle{named}
\begin{document}

\title{Geodesic Gradient Flows in Moduli Space}
\preprint{AFCI-T23-09}

\author{Muldrow Etheredge}
\email{muldrowdoskeyetheredge@gmail.com}
\affiliation{Department of Physics, University of Massachusetts, Amherst, MA 01003 USA}

\author{Ben Heidenreich}
\email{bheidenreich@umass.edu }
\affiliation{Department of Physics, University of Massachusetts, Amherst, MA 01003 USA}

\date{\today}

\begin{abstract}
Geodesics in moduli spaces of string vacua are important objects in string phenomenology. In this paper, we highlight a simple condition that connects brane tensions, including particle masses, with geodesics in moduli spaces. Namely, when a brane's scalar charge-to-tension ratio vector $-\nabla \log T$ has a fixed length, then the gradient flow induced by the logarithm of the brane's tension is a geodesic. We show that this condition is satisfied in many examples in the string landscape.
\end{abstract}

\pacs{}

\maketitle

\section{Introduction}
Much effort has been devoted to relating the brane and particle spectrum of a theory with its moduli space, and in particular with the geometrical properties thereof, such as the moduli space metric, closed cycles in the moduli space, and infinitely long spikes (infinite distance limits). In particular, geodesics in moduli space play a key role in several swampland conjectures, such as the (sharpened) Distance Conjecture \cite{Ooguri:2006in, Etheredge:2022opl} and the Emergent String Conjecture \cite{Lee:2019wij}, both of which constrain the behavior of the theory near its infinite distance limits. There is an extensive existing literature on these topics (e.g., \cite{Grimm:2018ohb, Blumenhagen:2018nts, Grimm:2018cpv, Corvilain:2018lgw, Joshi:2019nzi, Marchesano:2019ifh, Font:2019cxq, Erkinger:2019umg, Buratti:2018xjt, Heidenreich:2018kpg, Gendler:2020dfp, Lanza:2020qmt, Klaewer:2020lfg, vanBeest:2021lhn, Palti:2019pca, Blumenhagen:2017cxt, Baume:2016psm, Klaewer:2016kiy, Rudelius:2023mjy, Lee:2018urn, Lee:2019xtm, Lanza:2021udy, Baume:2019sry, Calderon-Infante:2023ler, Castellano:2023stg, Castellano:2023jjt}).

Local quantities, such as scalar charge-to-tension ratios, also play an important role:
\begin{defn}
	The \textbf{scalar charge-to-tension ratio} or ``$\alpha$-vector" of a brane with moduli-dependent tension $T(\phi)$ is defined as
	\begin{align}
		\vec \alpha \equiv - \vec\nabla \log T,
	\end{align}
	where the gradient is with respect to the moduli and the Planck mass is set to one.
\end{defn}
\noindent These appear prominently in the Scalar Weak Gravity Conjecture \cite{Palti:2017elp,Calderon-Infante:2020dhm,Etheredge:2023odp,Etheredge:2023usk,Etheredge:Taxonomy}, which is a convex hull condition on $\alpha$-vectors for particle towers.

Geodesics can connect asymptotic statements, such as the (sharpened) Distance Conjecture, with local ones, such as the Scalar Weak Gravity Conjecture. This was first explored in~\cite{Calderon-Infante:2020dhm}. More recently, \cite{Etheredge:2023usk} considered a mechanism where $\alpha$-vectors align with geodesics that asymptote to infinite distance limits of the moduli space.

 In this paper, we discover a very simple connection between $\alpha$-vectors and geodesics: When an $\alpha$-vector has constant length throughout moduli space, then the integral curves of the $\alpha$-vector field are geodesics. Equivalently, the gradient flows induced by the logarithm of the corresponding brane tension are geodesics.
 
In addition to proving this relation, we demonstrate that there are numerous states in string theory whose $\alpha$-vectors have fixed lengths. When the states in question are fundamental branes or infinite tower of particles, the corresponding geodesics will asymptote to infinite-distance limits in the moduli space. Our results provide a broadly applicable method for (1) finding new geodesics in moduli space, (2) connecting brane tensions with the moduli space geometry, and (3) identifying locally which geodesics go to infinite distance limits.

\section{Geodesic Gradient Flows}
We begin with a simple observation:

\begin{thm}
Consider a scalar function $F$ on the moduli whose gradient has constant length.
Then a solution to the gradient flow equation, 
\begin{align}
	\frac{d\phi^i}{dt}=\nabla^i F, \label{eqn:gradflow}
\end{align}
is a geodesic.
\end{thm}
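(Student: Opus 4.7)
My plan is to show that the tangent vector $v^i \equiv d\phi^i/dt = \nabla^i F$ satisfies the geodesic equation $v^j \nabla_j v^i = 0$ directly, using the constancy of $|\nabla F|^2 = g_{ij}\nabla^i F \nabla^j F$. This is the natural formulation because the gradient flow equation already builds the curve's tangent vector out of $F$ itself, so the geodesic equation becomes a statement purely about covariant derivatives of $F$.

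The first step is to rewrite $v^j \nabla_j v^i$. Substituting $v^i = \nabla^i F$ and using metric compatibility of the Levi-Civita connection gives
\begin{equation}
v^j \nabla_j v^i = (\nabla^j F)\,(\nabla_j \nabla^i F) = g^{ik}(\nabla^j F)(\nabla_j \nabla_k F).
\end{equation}
The key identity to bring in next is the symmetry of the Hessian of a scalar, $\nabla_j \nabla_k F = \nabla_k \nabla_j F$, which follows from the connection being torsion-free. This allows me to trade the two indices on the Hessian and regroup the expression.

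After swapping the indices, I can recognize the result as a total derivative of $|\nabla F|^2$:
\begin{equation}
g^{ik}(\nabla^j F)(\nabla_k \nabla_j F) = \tfrac{1}{2}\,g^{ik}\nabla_k\bigl(\nabla^j F\,\nabla_j F\bigr) = \tfrac{1}{2}\,\nabla^i |\nabla F|^2.
\end{equation}
The hypothesis that $|\nabla F|$ is constant on the moduli space then immediately kills the right-hand side, so $v^j \nabla_j v^i = 0$. Since this is equivalent to $\ddot\phi^i + \Gamma^i_{jk}\dot\phi^j \dot\phi^k = 0$ with affine parameter $t$, the gradient flow curve is a geodesic.

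I do not expect any genuine obstacle here; the argument is essentially a three-line index calculation. The one subtlety worth flagging explicitly in the write-up is that the conclusion comes with $t$ automatically affine (in fact, arc length is proportional to $t$ because $|v|^2 = |\nabla F|^2$ is constant along the flow), so no reparametrization is required. It is also worth noting that the argument nowhere uses a specific signature or dimension of the moduli space, only that the connection is the Levi-Civita connection of a (pseudo-)Riemannian metric.
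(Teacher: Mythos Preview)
Your proof is correct and follows essentially the same route as the paper: both compute the covariant acceleration of the gradient flow, use the symmetry of the Hessian of $F$, and recognize the result as $\tfrac{1}{2}\nabla^i|\nabla F|^2$, which vanishes by hypothesis. The only cosmetic difference is that the paper writes out the Christoffel-symbol form of the geodesic equation explicitly before collapsing it, whereas you work in covariant notation from the start and make the torsion-free step and the affine-parameter remark explicit.
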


\begin{proof}
Applying \eqref{eqn:gradflow},
we obtain 
\begin{align}
	\frac{d^2 \phi^i}{dt^2}+\Gamma^i_{jk}\frac{d\phi^j}{dt}\frac{d\phi^k}{dt} &= \partial_j(\nabla^i F)\frac{d\phi^j}{dt}+\Gamma^i_{jk}\frac{d\phi^j}{dt}\frac{d\phi^k}{dt}\nonumber\\
	&= \partial_j(\nabla^i F)\nabla^j F+\Gamma^i_{jk}\nabla^j F \nabla^k F \nonumber\\
	&= \nabla_j(\nabla^i F)\nabla^j F = \frac{1}{2}\nabla^i(\nabla F)^2 = 0 \,,
\end{align}
since $(\nabla F)^2$ is constant. Thus, $F$ gradient flows follow geodesics.
\end{proof}
Formally, this transforms the problem of solving the second-order geodesic equation into a simpler, first-order problem, at the expense of first requiring a solution to the partial differential equation
\begin{equation}
(\nabla F)^2 = \text{constant}\,. \label{eqn:Fcond}
\end{equation}

Does a function $F$ satisfying this condition necessarily exist? First, note that if we rescale $F$ such that $(\nabla F)^2 = 1$ then the affine parameter $t$ becomes the distance travelled along the flow. Moreover,
\begin{equation}
\frac{d F}{d t} = \frac{d\phi^i}{d t} \nabla_i F = (\nabla F)^2 = 1 \,,
\end{equation}
so the change in $F$ along the flow also measures the distance travelled.

Turning this around, the \emph{distance function} $d(\phi, \phi_0)$ specifying the geodesic distance between $\phi$ and fixed reference point $\phi_0$ clearly has the property that the change in $d(\phi, \phi_0)$ along a geodesic originating at $\phi_0$ equals the distance travelled. Indeed, one can show that $(\nabla d(\phi,\phi_0))^2 = 1$ (away from $\phi = \phi_0$),\footnote{To be precise, this is true at generic points; $d(\phi,\phi_0)$ (as well as the generalized distance functions discussed below) can fail to be differentiable at special points where the shortest route to $\phi_0$ changes discontinuously, e.g., when winding around a cylinder.} so \eqref{eqn:Fcond} can be solved by distance functions.

More generally, the reference point $\phi_0$ can be replaced by any region $\mathcal{R}$---such that $d_{\mathcal{R}}(\phi)$ is the length of the shortest geodesic connecting $\phi$ to $\mathcal{R}$---without affecting the condition $(\nabla d_{\mathcal{R}}(\phi))^2 = 1$ outside of $\mathcal{R}$. Choosing $\mathcal{R}$ to be an oriented codimension-one surface $\Sigma$ and $d_{\Sigma}(\phi)$ to be the \emph{signed} geodesic distance to $\Sigma$---which is positive or negative depending on which side of $\Sigma$ the geodesic approaches from---$(\nabla d_{\Sigma}(\phi))^2 = 1$ holds even atop $\Sigma$. With this generalized notion of a distance function, \emph{any} solution to $(\nabla F)^2 = 1$ is a distance function, where the reference surface is the surface $F=0$ (or, up to a shift, any other surface of constant $F$).

\subsection{Connection to branes, infinite-distance limits}

While this link between gradient flows and geodesics is intriguing, it only becomes very useful when there is a natural class of distance functions to consider.
The case of interest in this paper is when $F$ is the (negative) logarithm of the tension $T$ of a brane
\begin{align}
	F=-\log T .
\end{align}
In this case, $\vec{\nabla} F$ is the brane's $\alpha$-vector, $\vec{\alpha} = - \vec{\nabla} \log T= \vec{\nabla} F$. Thus, when $\vec{\alpha}$ has fixed length, $-\log T$ is a distance function and gradient flows induced by $-\log T$ are geodesics




Let us assume that the species scale can only go to zero in infinite-distance limits of the moduli space. Then, whenever a tower of particles or a fundamental string (or a ``fundamental'' $p$-brane, see the discussion in~\cite{Reece:2018zvv}) has an $\alpha$-vector of constant length everywhere in moduli space, the resulting geodesic gradient flow is a geodesic going to an infinite-distance limit. That is, the geodesic gradient flow cannot spiral around the moduli space endlessly but must eventually asymptote to infinite distance. 

This follows because the $\alpha$-vectors from these states align with the geodesic gradient flows they generate. This in turn implies that, along the geodesic $\phi^i(t)$ generated by the gradient flow,
\begin{align}
	\frac{d}{dt} (-\log T) = -\frac {d\phi^i}{dt}\nabla _i\log T=\alpha^2.
\end{align}
Starting at $\phi_0$ and integrating a distance $d(\phi,\phi_0)$ along the geodesic to the point $\phi$, the mass of the tower / tension of the brane decreases exponentially:
\begin{align}
	m(\phi)= e^{-|\alpha|d(\phi,\phi_0)}m(\phi_0),\quad T(\phi)=e^{-|\alpha|d(\phi,\phi_0)}T(\phi_0).
\end{align}
Thus, the tower of particles and/or the brane oscillation modes become parametrically light along the gradient flow geodesic, sending the species scale $\Lambda$ to zero. By assumption, this implies that we are approaching an infinite-distance limit in the moduli space.

Because of this, the particle and string spectrum can be used to determine locally a sufficient condition for a geodesic to be an infinite-distance geodesic, provided that the relevant $\alpha$-vectors have fixed length. 

\section{Examples}

In fact, particles and branes with fixed-length $\alpha$-vectors are common in string/M-theory, especially in theories with 16 or 32 supercharges. We now showcase several of these examples.

\subsection{10d examples}

In 10d theories, the 1/2 BPS fundamental strings, 1/2 BPS D$p$-branes, and 1/2 BPS NS5-branes have tensions that depend on the canonically normalized dilaton $\phi$ via
\begin{align}
	T_{\text{D}p}\sim e^{\frac{p-3}{\sqrt{d-2}}\phi},\quad T_\text{F1}\sim e^{\frac{2}{\sqrt{d-2}}\phi},\quad
	T_\text{NS5}\sim e^{-\frac{2}{\sqrt{d-2}}\phi},
\end{align}
in 10d Planck units. The resulting $\alpha$-vectors for fundamental strings, D-branes, and NS5-branes are all of constant length, and thus the gradient flows of the logarithms of these tensions are geodesics.

Additionally, oscillators of $p$-branes scale as
\begin{align}
	m_\text{osc}\sim T_p^{\frac 1{p+1}} \qquad\Longrightarrow\qquad \vec \alpha_\text{osc}=\frac 1{p+1}\vec\alpha_p \,,
\end{align}
hence these $\alpha$-vectors also have fixed length, provided that the branes have $\alpha$-vectors of fixed length. When this happens, the gradient flows of logarithms of masses of oscillators are the same geodesics generated by the gradient flows of the logarithms of the tensions.



In 10d IIB string theory, constant-length $\alpha$-vectors generate infinitely many independent geodesics \cite{Etheredge:2023usk}. The moduli space here is parametrized by the axiodilaton
\begin{align}
	\tau=C_0+ie^{-\phi},
\end{align}
valued in the fundamental domain
\begin{align}
	\mathcal M_\text{10d IIB}=\left\{\tau=\tau_1+i\tau_2\ |\ \tau_1\in[-1/2,1/2], |\tau|\geq 1\right\},
\end{align}
where the metric is the Poincar\'e upper half-plane metric,
\begin{align}
	ds^2=\frac{d\tau_1^2+d\tau_2^2}{\tau_2^2}.
\end{align}

The tensions of $(p,q)$ strings and fivebranes can be obtained from the SL$(2,\mathbb Z)$-orbits of the D1 and D5-branes. The $(p,q)$ strings and fivebranes have tensions of:
\begin{align}
	T_{(p,q)}^{1,5}&\sim \sqrt{e^\Phi (p+C_0q)^2+e^{-\Phi}q^2}=\frac{|p+\tau q|}{\sqrt{\tau_2}}.
\end{align}

Using a canonically normalized basis,
the scalar charge-to-tension ratios of $(p,q)$ strings and fivebranes are \cite{Etheredge:2022opl,Etheredge:2023usk}
\begin{align}
	\vec \alpha_{(p,q)}^\text{strings}=\vec \alpha_{(p,q)}^\text{5-branes}=\begin{pmatrix}\frac{\sqrt2q\tau_2(p+\tau_1q)}{|p+\tau q|^2}\\\frac{q^2\tau_2^2-(p+\tau_1q)^2}{\sqrt 2|p+\tau q|^2}\end{pmatrix}.\label{e.10dIIBpqalphas}
\end{align}
The $(p,q)$ strings and fivebranes all have constant length $2/\sqrt{d-2}=1/\sqrt{2}$.

The logarithms of tensions of $(p,q)$ strings and fivebranes generate infinitely many independent infinite-distance geodesic gradient flows, since $p$ and $q$ can be varied over infinitely many values. See Figures \ref{f.10dIIBpqfield} and \ref{f.10dIIBpqalpharaygeodesics}. 

\begin{figure}
\begin{center}
\includegraphics[width = 80mm]{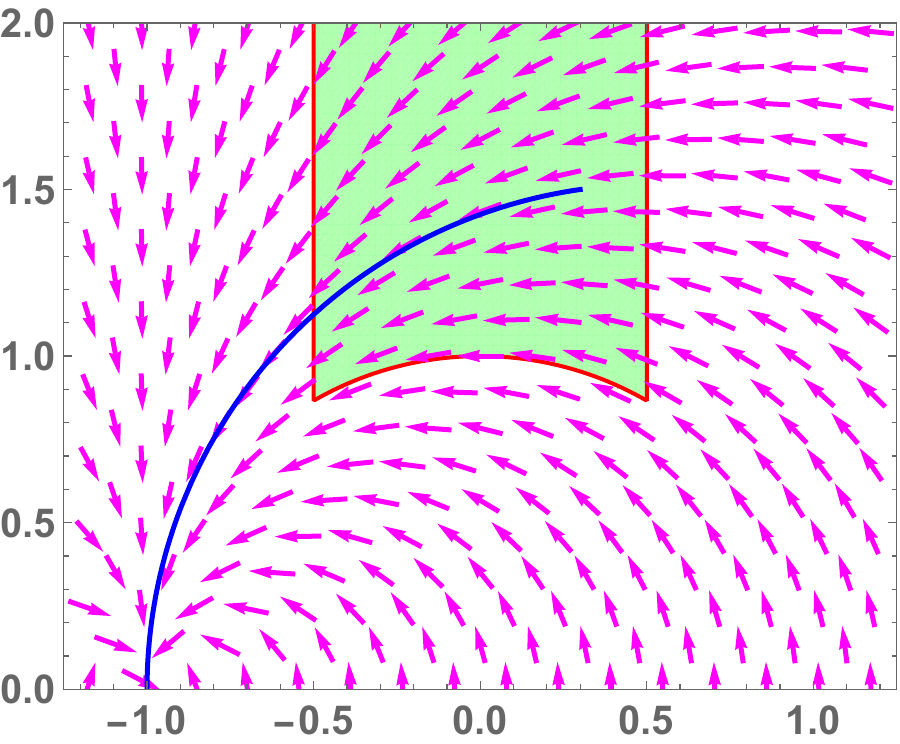}
\end{center}
\caption{$\alpha$-vector field for $(1,1)$ strings on covering space. The blue line is an infinite-distance geodesic going to the point $-1$ on the $\tau_1$-axis of the Poincar\'e half-plane covering space. Figure from \cite{Etheredge:2023usk}.}
\label{f.10dIIBpqfield}
\end{figure}

\begin{figure}
\begin{center}
\includegraphics[width = 80mm]{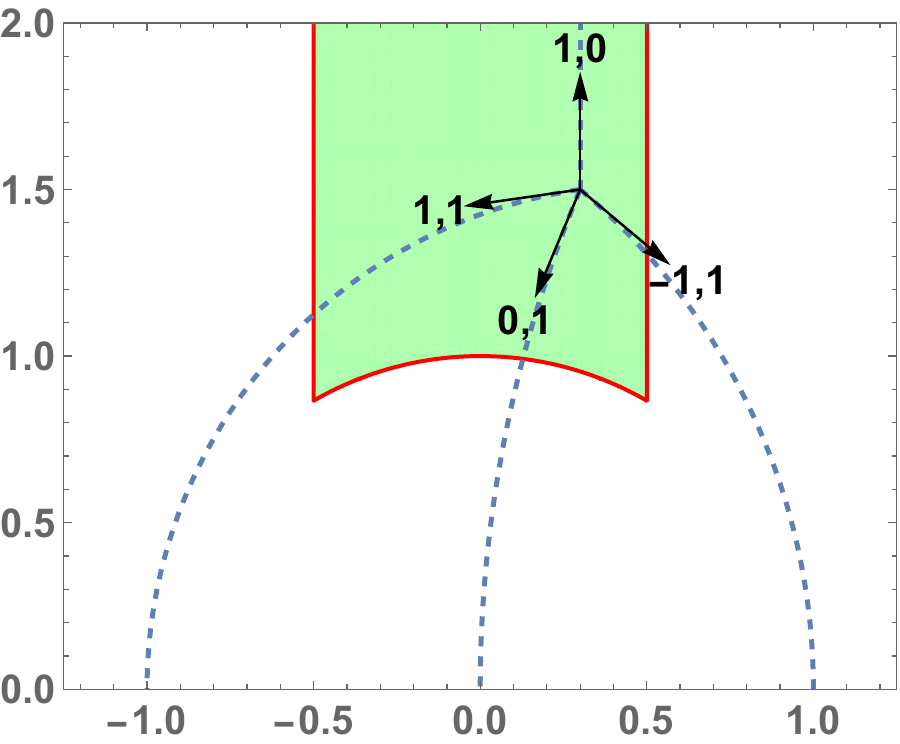}
\end{center}
\caption{$\alpha$-vectors for $(1,1)$, $(-1,1)$, $(0,1)$ and $(1,0)$ strings/fivebranes are parallel to geodesics going to $\tau_1=-1$, $1$, $0$, and $i\infty$ on $\tau_2=0$-axis of Poincar\'e covering space. Figure from \cite{Etheredge:2023usk}. }
\label{f.10dIIBpqalpharaygeodesics}
\end{figure}

\subsection{9d examples}

For heterotic string theory on a circle, both the wrapped and unwrapped fundamental strings, as well as the KK-modes, have constant-length $\alpha$-vectors \cite{Etheredge:2023odp}. So too do the wrapped and unwrapped NS5-branes, and also the KK-monopole. The dilaton-radion components of these are depicted in Figure \ref{f.9dhetalphas}.

\begin{figure}
\begin{center}
\includegraphics[width = 80mm]{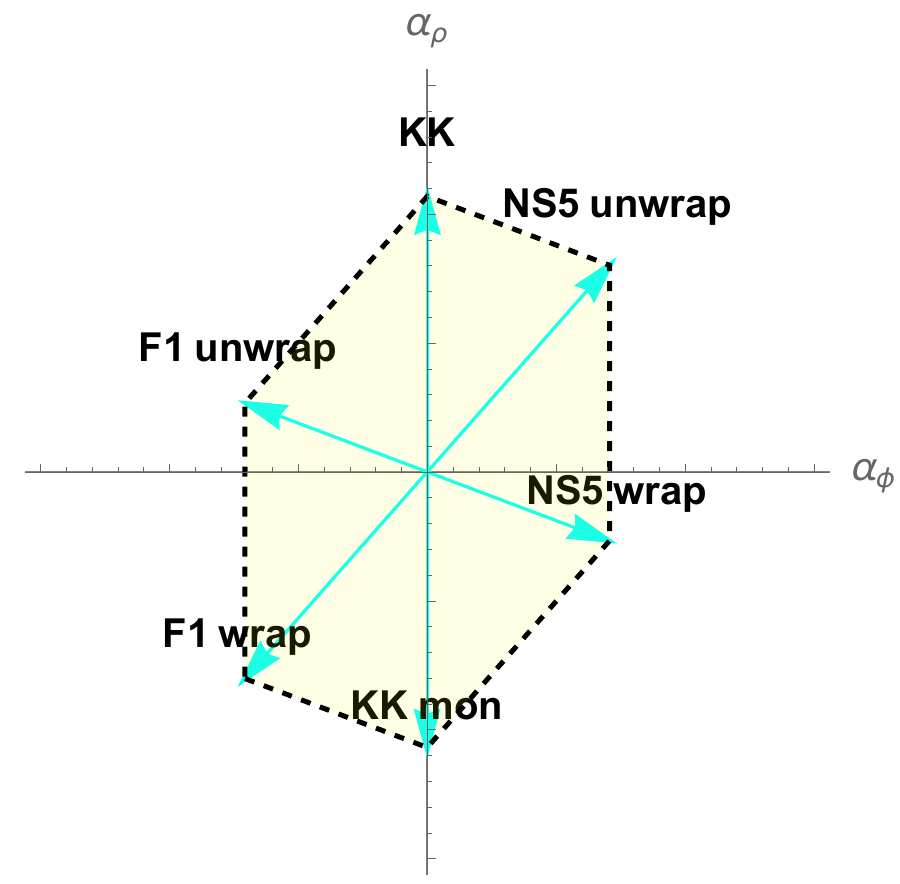}
\end{center}
\caption{Dilaton-radion components of some constant-length $\alpha$-vectors from 9d heterotic string theory. Depicted are KK modes, the KK monopole, and wrapped and unwrapped fundamental strings and NS5-branes.}
\label{f.9dhetalphas}
\end{figure}

Next, consider M-theory on a 2-torus, where the moduli space is three-dimensional and non-flat. The moduli space is parametrized by the volume $U$ of the torus and its shape parameters $\tau=\tau_0+i\tau_1$ \cite{Etheredge:2022opl}. Alternatively, from a IIB on a circle perspective, $\tau$ is IIB string theory's axiodilaton, and $U$ is the circle's radion.

There are infinitely many states with constant $\alpha$-vector lengths. By \cite{Etheredge:2023usk}, the KK-modes with $p$-quanta of momentum along one cycle and $q$-quanta of momenta along the other cycle have constant length $\alpha$-vectors. The KK-monopoles have tensions that are given by just the reciprocals of the KK-mode masses and thus have $\alpha$-vectors of constant length. For M5-branes and M2-branes unwrapped or fully wrapped, the tensions scale with the moduli through exponentials of $U$, and thus their $\alpha$-vectors are of constant length.\footnote{Note that the 1/4 BPS consisting of wrapped M2 branes carrying KK momentum do \emph{not} have fixed length $\alpha$-vectors, see, e.g.,~\cite{Etheredge:2022opl}. There are of course many other states like this in string compactifcations; here we are focusing on the states which do have fixed length $\alpha$-vectors.\label{fn:quaterBPS}} IIB string theory's $(p,q)$ strings and fivebranes describe the strings and particles that come from M2-branes and M5-branes that wrap one cycle  $p$ times and the other by $q$ times, and thus also have constant-length $\alpha$-vectors. The radion-radion components of some of these branes for an orthogonal torus is plotted in Figure \ref{f.9dIIAalphas}.

\begin{figure}
\begin{center}
\includegraphics[width = 80mm]{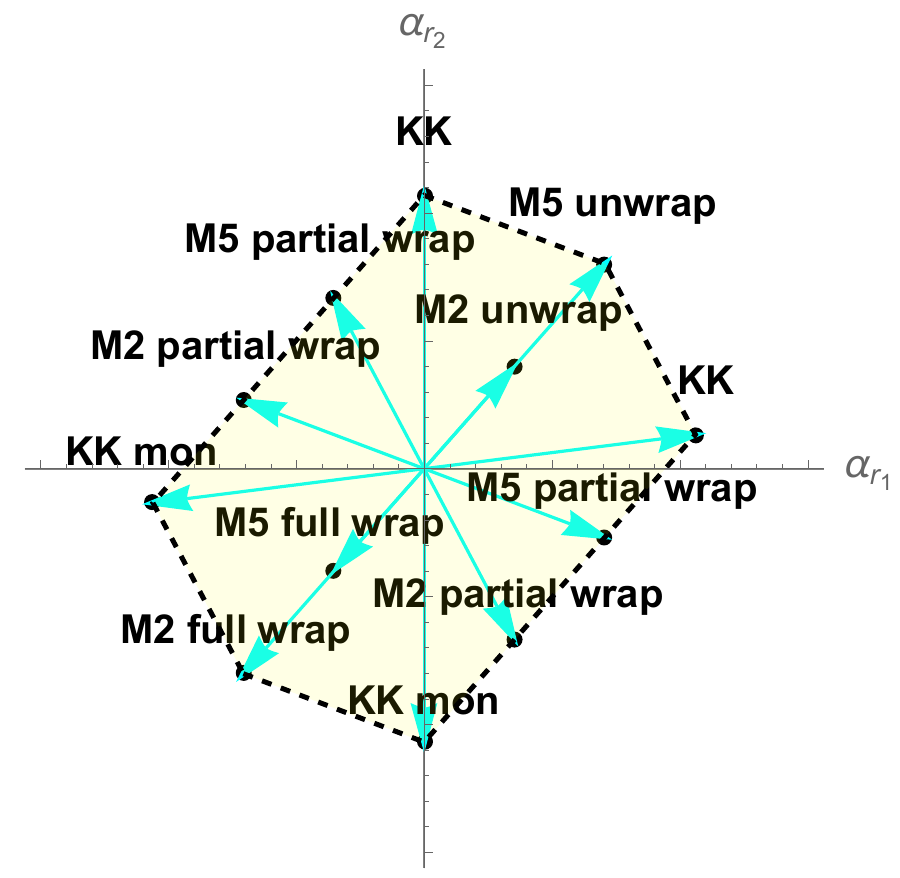}
\end{center}
\caption{Radion-radion components of constant-length $\alpha$-vectors from M-theory on $T^2$. Depicted are the KK modes and monopoles, as well as unwrapped, partially wrapped, and fully wrapped M2 and M5-branes.}
\label{f.9dIIAalphas}
\end{figure}

\subsection{8d examples and beyond}

In the case of heterotic string theory on a two-torus, states with fixed-length $\alpha$-vectors include the 1/2 BPS KK-modes, fully or partially wrapped fundamental strings, NS5-branes, and KK-monopoles. Likeiwse, for M-theory on $T^3$, states with fixed-length $\alpha$-vectors include 1/2 BPS KK modes, fully-wrapped, partially-wrapped, and fully unwrapped M2-branes, M5-branes, and KK-monopoles.

This is just a small sample. There are, of course, numerous other examples in string/M/F-theory compactifications; we will not attempt to list them all.

\section{Future directions}

Our results raise a number of interesting questions---discussed below---whose resolution we leave to future work.

\subsection{Which states have fixed-length $\alpha$-vectors?}

All of the examples of states with fixed-length $\alpha$-vectors that we have discussed are 1/2 BPS states in maximal and half-maximal supergravity theories. It would be interesting to understand whether the phenomenon extends beyond this particular arena. Note that, even in maximal supergravity, 1/4 BPS states \emph{do not} have fixed-length $\alpha$-vectors, see footnote \ref{fn:quaterBPS}.

Moreover, there are certainly examples of 1/2 BPS states in theories with only 8 supercharges which do not have $\alpha$-vectors of fixed length. For example, this is common for the 5d 1/2 BPS particles arising from M2 branes wrapping curves on a Calabi-Yau threefold, as well as for the 1/2 BPS strings arising from M5 branes wrapping divisors. Nonetheless, in special circumstances these states could still have fixed-length $\alpha$-vectors; we leave further investigation of this to future work.

A related question is when particles and/or branes have $\alpha$-vectors of approximately fixed length in some asymptotic region. We expect that this is very common, if not ubiquitous (and therefore our methods could have general applicability in such regions) but we defer this question to future work as well.

\subsection{Self-force and constant gauge charge-to-tension}

BPS states, of course, have vanishing self-force. More generally, for any state with vanishing self-force, $\vec{\alpha}$ has fixed length if and only if the (canonically normalized) gauge charge-to-tension ratio has fixed magnitude. This is because $(p-1)$-branes with zero long-range self-force have gauge charge-to-tension ratios that satisfy (in $d$-dimensions with $\kappa_d=1$) \cite{Heidenreich:2020upe}
\begin{align}
	\frac{Q^2}{T^2}=\alpha^2+\frac{p(d-p-2)}{d-2}.
\end{align}
Thus, for states with zero self-force (e.g., extremal branes at two-derivative order~\cite{Heidenreich:2019zkl,Heidenreich:2020upe,Harlow:2022ich,Etheredge:2022rfl}), $\vec{\alpha}$ has fixed length if and only if $\vec{Q}/T$ has fixed magnitude. Indeed, $\frac{Q^2}{T^2} = 2$ for all the basic fundamental branes in string/M-theory, but it would be interesting to understand what happens in more involved settings.

We briefly note a few immediate consequences of this observation. Firstly, if the charge-to-tension ratio is constant, then $-\log Q^2(\phi)$ (like $-\log T$) must be a distance function. This imposes non-trivial ($Q$-dependent) constraints on a combination of the gauge-kinetic matrix and the metric on moduli space, limiting the class of theories in which this confluence between a constant $\alpha$-vector and zero self-force can occur. 

Secondly, since $-\log T$ is a distance function, vanishing tension $T\to 0$ can only occur in at infinite distance, $-\log T \to \infty$. Assuming vanishing self-force, this implies that the state is extremal, see~\cite{Harlow:2022ich}, appendix A. The properties of the corresponding extremal black hole / black brane solution are controlled by the fact that $-\log Q^2(\phi)$ is also a distance function. Since $Q^2(\phi)$ has no local minima, the solution cannot flow to an attractor point inside the moduli space. Instead, the attractor flows go to infinite distance, where necessarily $Q^2(\phi) \to 0$ since the distance function $-\log Q^2(\phi)$ diverges. Thus, the horizon area vanishes and the moduli diverge at the horizon, much like the basic extremal black brane solutions in string theory.

\subsection{Laplacians in moduli space}
In addition to the gradients of the logarithms of tensions and masses, it is interesting to consider the Laplacians of such functions (equivalently, the divergences of the $\alpha$-vectors). In many of the examples we have studied above where $\alpha$-vectors have constant length, the divergences of such $\alpha$-vectors are constant. 

In 10d IIB string theory, $(p,q)$ strings and fivebranes have tensions satisfying
\begin{align}
	\nabla^2 \log T_{(p,q)}^{1,5}=\frac 12.
\end{align}
Likewise, in 10d type I, IIA, and heterotic theories, all of the 1/2 BPS brane tensions satisfy\footnote{Since the moduli space is one-dimensional, this is a trivial consequence of $\vec{\alpha}$ having fixed length.}
\begin{align}
	\nabla^2 \log T=0.
\end{align}
In 9d maximal supergravity, the particles from fully-wrapped M2-branes, and the wrapped and unwrapped $(p,q)$ strings from IIB, respectively satisfy
\begin{align}
	\nabla^2 \log m_w=0,\quad
	\nabla^2 \log m_{(p,q)}=1,\quad
	\nabla^2 \log T^{(1)}_{(p,q)}=1.
\end{align}
It would be interesting to explore this further.

Note that when both $\vec{\nabla}\cdot\vec{\alpha} = -\nabla^2 \log T$ and $|\vec{\alpha}| = |\nabla \log T|$ are constant, the tension is necessarily an eigenfunction of the Laplacian:
\begin{align}
	\nabla^2 T=[(\nabla \log T)^2+\nabla^2 \log T]T= (\alpha^2+\rho)T,
\end{align}
where $\rho=\nabla^2 \log T$. 

These results suggest the possibility of additional deep connections between the particle/brane spectrum, distance functions, and the moduli space geometry. Further exploring Laplacian eigenfunctions, eigenvalues, and solutions to Laplace's equation may lead to further insights.


\subsection{Distance functions}
As discussed in this paper, when an $\alpha$-vector has fixed length, then $\log T$ is, up to a multiplicative factor, a distance function.

However, generic distance functions can have codimension-one (or higher codimension) loci where they are non-differentiable and the equation $(\nabla F)^2 = 1$ fails. For instance, on a circle $\theta\in[0,2\pi)$, the distance function with reference point $\theta=0$ has non-differentiable kinks at $\theta=0, \pi$.\footnote{By contrast, the signed distance function has no kink, but is discontinuous at $\theta = \pi$.} These kinks occur where the shortest geodesic jumps discontinuously (or shrinks to zero length).

When these kinks occur, $F$ gradient flows only coincide with geodesics away from the kinks. By contrast, in the examples we have discussed the brane tension is everywhere analytic, so that $\log T$ is a distance function without any kinks in it and gradient flows coincide perfectly with geodesics.

It would be thus interesting to better understand what kinds of distance functions have well-defined gradients everywhere in moduli spaces, and whether the existence of such a distance function places any constraints on the moduli space geometry.

While we defer further exploration of this to future work, we note in passing that a compact moduli space cannot have a kink-free distance function defined on it because any function on a compact space has a global minimum, and $\nabla F = 0$ at such a minimum (if $F$ is differentiable), contradicting $(\nabla F)^2 = 1$. Moreover, there are non-compact (incomplete) metric spaces, such as a sphere with a single point removed, that still do not admit kink-free distance functions. Perhaps the non-trivial requirement of the existence of a kink-free distance function could be related to some of the conjectural properties of quantum gravity moduli spaces~\cite{Ooguri:2006in}, such as non-compactness and (asymptotic) negative curvature.

\subsection{$\alpha$-vectors of branes}
Our results motivate the study of $\alpha$-vectors for branes, since we have shown that these $\alpha$-vectors are often connected with moduli space geodesics. Brane $\alpha$-vectors have been comparatively little explored in the literature (see however~\cite{Font:2019cxq}). This is in contrast to the study of $\alpha$-vectors for particles and strings, which have been extensively studied, and constrained \cite{Etheredge:Taxonomy}, due to their connection with the Distance Conjecture, Emergent String Conjecture, and Scalar Weak Gravity Conjecture. We explore brane $\alpha$-vectors further in~\cite{Etheredge:BraneSWGC}.

\section{Conclusions}

In this paper, we have shown that when the scalar charge-to-tension ratio of a brane has constant length, the gradient flow induced by the logarithm of the tension is a geodesic. We have demonstrated that this frequently occurs in the string landscape. When this happens for particle towers or fundamental branes, our work implies that the resulting geodesics reach an infinite distance limit. Our results provide a method for generating geodesics, and for diagnosing which geodesics travel to infinite-distance limits, connecting local data (e.g., scalar charge-to-tension ratios) with asymptotic properties (such as the distance conjecture).

\begin{acknowledgments}
\vspace*{1cm}
{\bf Acknowledgments.}  
We are grateful for conversations with Naomi Gendler, Jacob McNamara, Matthew Reece, Tom Rudelius, Ignacio Ruiz, and Irene Valenzuela. This work was supported by NSF grant PHY-2112800.
\end{acknowledgments}

\bibliography{ref}

\begin{thebibliography}{40}%
\makeatletter
\providecommand \@ifxundefined [1]{%
 \@ifx{#1\undefined}
}%
\providecommand \@ifnum [1]{%
 \ifnum #1\expandafter \@firstoftwo
 \else \expandafter \@secondoftwo
 \fi
}%
\providecommand \@ifx [1]{%
 \ifx #1\expandafter \@firstoftwo
 \else \expandafter \@secondoftwo
 \fi
}%
\providecommand \natexlab [1]{#1}%
\providecommand \enquote  [1]{``#1''}%
\providecommand \bibnamefont  [1]{#1}%
\providecommand \bibfnamefont [1]{#1}%
\providecommand \citenamefont [1]{#1}%
\providecommand \href@noop [0]{\@secondoftwo}%
\providecommand \href [0]{\begingroup \@sanitize@url \@href}%
\providecommand \@href[1]{\@@startlink{#1}\@@href}%
\providecommand \@@href[1]{\endgroup#1\@@endlink}%
\providecommand \@sanitize@url [0]{\catcode `\\12\catcode `\$12\catcode
  `\&12\catcode `\#12\catcode `\^12\catcode `\_12\catcode `\%12\relax}%
\providecommand \@@startlink[1]{}%
\providecommand \@@endlink[0]{}%
\providecommand \url  [0]{\begingroup\@sanitize@url \@url }%
\providecommand \@url [1]{\endgroup\@href {#1}{\urlprefix }}%
\providecommand \urlprefix  [0]{URL }%
\providecommand \Eprint [0]{\href }%
\providecommand \doibase [0]{http://dx.doi.org/}%
\providecommand \selectlanguage [0]{\@gobble}%
\providecommand \bibinfo  [0]{\@secondoftwo}%
\providecommand \bibfield  [0]{\@secondoftwo}%
\providecommand \translation [1]{[#1]}%
\providecommand \BibitemOpen [0]{}%
\providecommand \bibitemStop [0]{}%
\providecommand \bibitemNoStop [0]{.\EOS\space}%
\providecommand \EOS [0]{\spacefactor3000\relax}%
\providecommand \BibitemShut  [1]{\csname bibitem#1\endcsname}%
\let\auto@bib@innerbib\@empty
\bibitem [{\citenamefont {Ooguri}\ and\ \citenamefont
  {Vafa}(2007)}]{Ooguri:2006in}%
  \BibitemOpen
  \bibfield  {author} {\bibinfo {author} {\bibfnamefont {H.}~\bibnamefont
  {Ooguri}}\ and\ \bibinfo {author} {\bibfnamefont {C.}~\bibnamefont {Vafa}},\
  }\href {\doibase 10.1016/j.nuclphysb.2006.10.033} {\bibfield  {journal}
  {\bibinfo  {journal} {Nucl.Phys.}\ }\textbf {\bibinfo {volume} {B766}},\
  \bibinfo {pages} {21} (\bibinfo {year} {2007})},\ \Eprint
  {http://arxiv.org/abs/hep-th/0605264} {arXiv:hep-th/0605264 [hep-th]}
  \BibitemShut {NoStop}%
\bibitem [{\citenamefont {Etheredge}\ \emph {et~al.}(2022)\citenamefont
  {Etheredge}, \citenamefont {Heidenreich}, \citenamefont {Kaya}, \citenamefont
  {Qiu},\ and\ \citenamefont {Rudelius}}]{Etheredge:2022opl}%
  \BibitemOpen
  \bibfield  {author} {\bibinfo {author} {\bibfnamefont {M.}~\bibnamefont
  {Etheredge}}, \bibinfo {author} {\bibfnamefont {B.}~\bibnamefont
  {Heidenreich}}, \bibinfo {author} {\bibfnamefont {S.}~\bibnamefont {Kaya}},
  \bibinfo {author} {\bibfnamefont {Y.}~\bibnamefont {Qiu}}, \ and\ \bibinfo
  {author} {\bibfnamefont {T.}~\bibnamefont {Rudelius}},\ }\href {\doibase
  10.1007/JHEP12(2022)114} {\bibfield  {journal} {\bibinfo  {journal} {JHEP}\
  }\textbf {\bibinfo {volume} {12}},\ \bibinfo {pages} {114} (\bibinfo {year}
  {2022})},\ \Eprint {http://arxiv.org/abs/2206.04063} {arXiv:2206.04063
  [hep-th]} \BibitemShut {NoStop}%
\bibitem [{\citenamefont {Lee}\ \emph {et~al.}(2022{\natexlab{a}})\citenamefont
  {Lee}, \citenamefont {Lerche},\ and\ \citenamefont {Weigand}}]{Lee:2019wij}%
  \BibitemOpen
  \bibfield  {author} {\bibinfo {author} {\bibfnamefont {S.-J.}\ \bibnamefont
  {Lee}}, \bibinfo {author} {\bibfnamefont {W.}~\bibnamefont {Lerche}}, \ and\
  \bibinfo {author} {\bibfnamefont {T.}~\bibnamefont {Weigand}},\ }\href
  {\doibase 10.1007/JHEP02(2022)190} {\bibfield  {journal} {\bibinfo  {journal}
  {JHEP}\ }\textbf {\bibinfo {volume} {02}},\ \bibinfo {pages} {190} (\bibinfo
  {year} {2022}{\natexlab{a}})},\ \Eprint {http://arxiv.org/abs/1910.01135}
  {arXiv:1910.01135 [hep-th]} \BibitemShut {NoStop}%
\bibitem [{\citenamefont {Grimm}\ \emph {et~al.}(2018)\citenamefont {Grimm},
  \citenamefont {Palti},\ and\ \citenamefont {Valenzuela}}]{Grimm:2018ohb}%
  \BibitemOpen
  \bibfield  {author} {\bibinfo {author} {\bibfnamefont {T.~W.}\ \bibnamefont
  {Grimm}}, \bibinfo {author} {\bibfnamefont {E.}~\bibnamefont {Palti}}, \ and\
  \bibinfo {author} {\bibfnamefont {I.}~\bibnamefont {Valenzuela}},\ }\href
  {\doibase 10.1007/JHEP08(2018)143} {\bibfield  {journal} {\bibinfo  {journal}
  {JHEP}\ }\textbf {\bibinfo {volume} {08}},\ \bibinfo {pages} {143} (\bibinfo
  {year} {2018})},\ \Eprint {http://arxiv.org/abs/1802.08264} {arXiv:1802.08264
  [hep-th]} \BibitemShut {NoStop}%
\bibitem [{\citenamefont {Blumenhagen}\ \emph {et~al.}(2018)\citenamefont
  {Blumenhagen}, \citenamefont {Kl\"awer}, \citenamefont {Schlechter},\ and\
  \citenamefont {Wolf}}]{Blumenhagen:2018nts}%
  \BibitemOpen
  \bibfield  {author} {\bibinfo {author} {\bibfnamefont {R.}~\bibnamefont
  {Blumenhagen}}, \bibinfo {author} {\bibfnamefont {D.}~\bibnamefont
  {Kl\"awer}}, \bibinfo {author} {\bibfnamefont {L.}~\bibnamefont
  {Schlechter}}, \ and\ \bibinfo {author} {\bibfnamefont {F.}~\bibnamefont
  {Wolf}},\ }\href {\doibase 10.1007/JHEP06(2018)052} {\bibfield  {journal}
  {\bibinfo  {journal} {JHEP}\ }\textbf {\bibinfo {volume} {06}},\ \bibinfo
  {pages} {052} (\bibinfo {year} {2018})},\ \Eprint
  {http://arxiv.org/abs/1803.04989} {arXiv:1803.04989 [hep-th]} \BibitemShut
  {NoStop}%
\bibitem [{\citenamefont {Grimm}\ \emph {et~al.}(2019)\citenamefont {Grimm},
  \citenamefont {Li},\ and\ \citenamefont {Palti}}]{Grimm:2018cpv}%
  \BibitemOpen
  \bibfield  {author} {\bibinfo {author} {\bibfnamefont {T.~W.}\ \bibnamefont
  {Grimm}}, \bibinfo {author} {\bibfnamefont {C.}~\bibnamefont {Li}}, \ and\
  \bibinfo {author} {\bibfnamefont {E.}~\bibnamefont {Palti}},\ }\href
  {\doibase 10.1007/JHEP03(2019)016} {\bibfield  {journal} {\bibinfo  {journal}
  {JHEP}\ }\textbf {\bibinfo {volume} {03}},\ \bibinfo {pages} {016} (\bibinfo
  {year} {2019})},\ \Eprint {http://arxiv.org/abs/1811.02571} {arXiv:1811.02571
  [hep-th]} \BibitemShut {NoStop}%
\bibitem [{\citenamefont {Corvilain}\ \emph {et~al.}(2019)\citenamefont
  {Corvilain}, \citenamefont {Grimm},\ and\ \citenamefont
  {Valenzuela}}]{Corvilain:2018lgw}%
  \BibitemOpen
  \bibfield  {author} {\bibinfo {author} {\bibfnamefont {P.}~\bibnamefont
  {Corvilain}}, \bibinfo {author} {\bibfnamefont {T.~W.}\ \bibnamefont
  {Grimm}}, \ and\ \bibinfo {author} {\bibfnamefont {I.}~\bibnamefont
  {Valenzuela}},\ }\href {\doibase 10.1007/JHEP08(2019)075} {\bibfield
  {journal} {\bibinfo  {journal} {JHEP}\ }\textbf {\bibinfo {volume} {08}},\
  \bibinfo {pages} {075} (\bibinfo {year} {2019})},\ \Eprint
  {http://arxiv.org/abs/1812.07548} {arXiv:1812.07548 [hep-th]} \BibitemShut
  {NoStop}%
\bibitem [{\citenamefont {Joshi}\ and\ \citenamefont
  {Klemm}(2019)}]{Joshi:2019nzi}%
  \BibitemOpen
  \bibfield  {author} {\bibinfo {author} {\bibfnamefont {A.}~\bibnamefont
  {Joshi}}\ and\ \bibinfo {author} {\bibfnamefont {A.}~\bibnamefont {Klemm}},\
  }\href {\doibase 10.1007/JHEP08(2019)086} {\bibfield  {journal} {\bibinfo
  {journal} {JHEP}\ }\textbf {\bibinfo {volume} {08}},\ \bibinfo {pages} {086}
  (\bibinfo {year} {2019})},\ \Eprint {http://arxiv.org/abs/1903.00596}
  {arXiv:1903.00596 [hep-th]} \BibitemShut {NoStop}%
\bibitem [{\citenamefont {Marchesano}\ and\ \citenamefont
  {Wiesner}(2019)}]{Marchesano:2019ifh}%
  \BibitemOpen
  \bibfield  {author} {\bibinfo {author} {\bibfnamefont {F.}~\bibnamefont
  {Marchesano}}\ and\ \bibinfo {author} {\bibfnamefont {M.}~\bibnamefont
  {Wiesner}},\ }\href {\doibase 10.1007/JHEP08(2019)088} {\bibfield  {journal}
  {\bibinfo  {journal} {JHEP}\ }\textbf {\bibinfo {volume} {08}},\ \bibinfo
  {pages} {088} (\bibinfo {year} {2019})},\ \Eprint
  {http://arxiv.org/abs/1904.04848} {arXiv:1904.04848 [hep-th]} \BibitemShut
  {NoStop}%
\bibitem [{\citenamefont {Font}\ \emph {et~al.}(2019)\citenamefont {Font},
  \citenamefont {Herr\'aez},\ and\ \citenamefont {Ib\'a\~nez}}]{Font:2019cxq}%
  \BibitemOpen
  \bibfield  {author} {\bibinfo {author} {\bibfnamefont {A.}~\bibnamefont
  {Font}}, \bibinfo {author} {\bibfnamefont {A.}~\bibnamefont {Herr\'aez}}, \
  and\ \bibinfo {author} {\bibfnamefont {L.~E.}\ \bibnamefont {Ib\'a\~nez}},\
  }\href {\doibase 10.1007/JHEP08(2019)044} {\bibfield  {journal} {\bibinfo
  {journal} {JHEP}\ }\textbf {\bibinfo {volume} {08}},\ \bibinfo {pages} {044}
  (\bibinfo {year} {2019})},\ \Eprint {http://arxiv.org/abs/1904.05379}
  {arXiv:1904.05379 [hep-th]} \BibitemShut {NoStop}%
\bibitem [{\citenamefont {Erkinger}\ and\ \citenamefont
  {Knapp}(2019)}]{Erkinger:2019umg}%
  \BibitemOpen
  \bibfield  {author} {\bibinfo {author} {\bibfnamefont {D.}~\bibnamefont
  {Erkinger}}\ and\ \bibinfo {author} {\bibfnamefont {J.}~\bibnamefont
  {Knapp}},\ }\href {\doibase 10.1007/JHEP07(2019)029} {\bibfield  {journal}
  {\bibinfo  {journal} {JHEP}\ }\textbf {\bibinfo {volume} {07}},\ \bibinfo
  {pages} {029} (\bibinfo {year} {2019})},\ \Eprint
  {http://arxiv.org/abs/1905.05225} {arXiv:1905.05225 [hep-th]} \BibitemShut
  {NoStop}%
\bibitem [{\citenamefont {Buratti}\ \emph {et~al.}(2019)\citenamefont
  {Buratti}, \citenamefont {Calder\'on},\ and\ \citenamefont
  {Uranga}}]{Buratti:2018xjt}%
  \BibitemOpen
  \bibfield  {author} {\bibinfo {author} {\bibfnamefont {G.}~\bibnamefont
  {Buratti}}, \bibinfo {author} {\bibfnamefont {J.}~\bibnamefont {Calder\'on}},
  \ and\ \bibinfo {author} {\bibfnamefont {A.~M.}\ \bibnamefont {Uranga}},\
  }\href {\doibase 10.1007/JHEP05(2019)176} {\bibfield  {journal} {\bibinfo
  {journal} {JHEP}\ }\textbf {\bibinfo {volume} {05}},\ \bibinfo {pages} {176}
  (\bibinfo {year} {2019})},\ \Eprint {http://arxiv.org/abs/1812.05016}
  {arXiv:1812.05016 [hep-th]} \BibitemShut {NoStop}%
\bibitem [{\citenamefont {Heidenreich}\ \emph {et~al.}(2018)\citenamefont
  {Heidenreich}, \citenamefont {Reece},\ and\ \citenamefont
  {Rudelius}}]{Heidenreich:2018kpg}%
  \BibitemOpen
  \bibfield  {author} {\bibinfo {author} {\bibfnamefont {B.}~\bibnamefont
  {Heidenreich}}, \bibinfo {author} {\bibfnamefont {M.}~\bibnamefont {Reece}},
  \ and\ \bibinfo {author} {\bibfnamefont {T.}~\bibnamefont {Rudelius}},\
  }\href {\doibase 10.1103/PhysRevLett.121.051601} {\bibfield  {journal}
  {\bibinfo  {journal} {Phys. Rev. Lett.}\ }\textbf {\bibinfo {volume} {121}},\
  \bibinfo {pages} {051601} (\bibinfo {year} {2018})},\ \Eprint
  {http://arxiv.org/abs/1802.08698} {arXiv:1802.08698 [hep-th]} \BibitemShut
  {NoStop}%
\bibitem [{\citenamefont {Gendler}\ and\ \citenamefont
  {Valenzuela}(2021)}]{Gendler:2020dfp}%
  \BibitemOpen
  \bibfield  {author} {\bibinfo {author} {\bibfnamefont {N.}~\bibnamefont
  {Gendler}}\ and\ \bibinfo {author} {\bibfnamefont {I.}~\bibnamefont
  {Valenzuela}},\ }\href {\doibase 10.1007/JHEP01(2021)176} {\bibfield
  {journal} {\bibinfo  {journal} {JHEP}\ }\textbf {\bibinfo {volume} {01}},\
  \bibinfo {pages} {176} (\bibinfo {year} {2021})},\ \Eprint
  {http://arxiv.org/abs/2004.10768} {arXiv:2004.10768 [hep-th]} \BibitemShut
  {NoStop}%
\bibitem [{\citenamefont {Lanza}\ \emph
  {et~al.}(2021{\natexlab{a}})\citenamefont {Lanza}, \citenamefont
  {Marchesano}, \citenamefont {Martucci},\ and\ \citenamefont
  {Valenzuela}}]{Lanza:2020qmt}%
  \BibitemOpen
  \bibfield  {author} {\bibinfo {author} {\bibfnamefont {S.}~\bibnamefont
  {Lanza}}, \bibinfo {author} {\bibfnamefont {F.}~\bibnamefont {Marchesano}},
  \bibinfo {author} {\bibfnamefont {L.}~\bibnamefont {Martucci}}, \ and\
  \bibinfo {author} {\bibfnamefont {I.}~\bibnamefont {Valenzuela}},\ }\href
  {\doibase 10.1007/JHEP02(2021)006} {\bibfield  {journal} {\bibinfo  {journal}
  {JHEP}\ }\textbf {\bibinfo {volume} {02}},\ \bibinfo {pages} {006} (\bibinfo
  {year} {2021}{\natexlab{a}})},\ \Eprint {http://arxiv.org/abs/2006.15154}
  {arXiv:2006.15154 [hep-th]} \BibitemShut {NoStop}%
\bibitem [{\citenamefont {Klaewer}\ \emph {et~al.}(2021)\citenamefont
  {Klaewer}, \citenamefont {Lee}, \citenamefont {Weigand},\ and\ \citenamefont
  {Wiesner}}]{Klaewer:2020lfg}%
  \BibitemOpen
  \bibfield  {author} {\bibinfo {author} {\bibfnamefont {D.}~\bibnamefont
  {Klaewer}}, \bibinfo {author} {\bibfnamefont {S.-J.}\ \bibnamefont {Lee}},
  \bibinfo {author} {\bibfnamefont {T.}~\bibnamefont {Weigand}}, \ and\
  \bibinfo {author} {\bibfnamefont {M.}~\bibnamefont {Wiesner}},\ }\href
  {\doibase 10.1007/JHEP03(2021)252} {\bibfield  {journal} {\bibinfo  {journal}
  {JHEP}\ }\textbf {\bibinfo {volume} {03}},\ \bibinfo {pages} {252} (\bibinfo
  {year} {2021})},\ \Eprint {http://arxiv.org/abs/2011.00024} {arXiv:2011.00024
  [hep-th]} \BibitemShut {NoStop}%
\bibitem [{\citenamefont {van Beest}\ \emph {et~al.}(2022)\citenamefont {van
  Beest}, \citenamefont {Calder\'on-Infante}, \citenamefont {Mirfendereski},\
  and\ \citenamefont {Valenzuela}}]{vanBeest:2021lhn}%
  \BibitemOpen
  \bibfield  {author} {\bibinfo {author} {\bibfnamefont {M.}~\bibnamefont {van
  Beest}}, \bibinfo {author} {\bibfnamefont {J.}~\bibnamefont
  {Calder\'on-Infante}}, \bibinfo {author} {\bibfnamefont {D.}~\bibnamefont
  {Mirfendereski}}, \ and\ \bibinfo {author} {\bibfnamefont {I.}~\bibnamefont
  {Valenzuela}},\ }\href {\doibase 10.1016/j.physrep.2022.09.002} {\bibfield
  {journal} {\bibinfo  {journal} {Phys. Rept.}\ }\textbf {\bibinfo {volume}
  {989}},\ \bibinfo {pages} {1} (\bibinfo {year} {2022})},\ \Eprint
  {http://arxiv.org/abs/2102.01111} {arXiv:2102.01111 [hep-th]} \BibitemShut
  {NoStop}%
\bibitem [{\citenamefont {Palti}(2019)}]{Palti:2019pca}%
  \BibitemOpen
  \bibfield  {author} {\bibinfo {author} {\bibfnamefont {E.}~\bibnamefont
  {Palti}},\ }\href {\doibase 10.1002/prop.201900037} {\bibfield  {journal}
  {\bibinfo  {journal} {Fortsch. Phys.}\ }\textbf {\bibinfo {volume} {67}},\
  \bibinfo {pages} {1900037} (\bibinfo {year} {2019})},\ \Eprint
  {http://arxiv.org/abs/1903.06239} {arXiv:1903.06239 [hep-th]} \BibitemShut
  {NoStop}%
\bibitem [{\citenamefont {Blumenhagen}\ \emph {et~al.}(2017)\citenamefont
  {Blumenhagen}, \citenamefont {Valenzuela},\ and\ \citenamefont
  {Wolf}}]{Blumenhagen:2017cxt}%
  \BibitemOpen
  \bibfield  {author} {\bibinfo {author} {\bibfnamefont {R.}~\bibnamefont
  {Blumenhagen}}, \bibinfo {author} {\bibfnamefont {I.}~\bibnamefont
  {Valenzuela}}, \ and\ \bibinfo {author} {\bibfnamefont {F.}~\bibnamefont
  {Wolf}},\ }\href {\doibase 10.1007/JHEP07(2017)145} {\bibfield  {journal}
  {\bibinfo  {journal} {JHEP}\ }\textbf {\bibinfo {volume} {07}},\ \bibinfo
  {pages} {145} (\bibinfo {year} {2017})},\ \Eprint
  {http://arxiv.org/abs/1703.05776} {arXiv:1703.05776 [hep-th]} \BibitemShut
  {NoStop}%
\bibitem [{\citenamefont {Baume}\ and\ \citenamefont
  {Palti}(2016)}]{Baume:2016psm}%
  \BibitemOpen
  \bibfield  {author} {\bibinfo {author} {\bibfnamefont {F.}~\bibnamefont
  {Baume}}\ and\ \bibinfo {author} {\bibfnamefont {E.}~\bibnamefont {Palti}},\
  }\href {\doibase 10.1007/JHEP08(2016)043} {\bibfield  {journal} {\bibinfo
  {journal} {JHEP}\ }\textbf {\bibinfo {volume} {08}},\ \bibinfo {pages} {043}
  (\bibinfo {year} {2016})},\ \Eprint {http://arxiv.org/abs/1602.06517}
  {arXiv:1602.06517 [hep-th]} \BibitemShut {NoStop}%
\bibitem [{\citenamefont {Klaewer}\ and\ \citenamefont
  {Palti}(2017)}]{Klaewer:2016kiy}%
  \BibitemOpen
  \bibfield  {author} {\bibinfo {author} {\bibfnamefont {D.}~\bibnamefont
  {Klaewer}}\ and\ \bibinfo {author} {\bibfnamefont {E.}~\bibnamefont
  {Palti}},\ }\href {\doibase 10.1007/JHEP01(2017)088} {\bibfield  {journal}
  {\bibinfo  {journal} {JHEP}\ }\textbf {\bibinfo {volume} {01}},\ \bibinfo
  {pages} {088} (\bibinfo {year} {2017})},\ \Eprint
  {http://arxiv.org/abs/1610.00010} {arXiv:1610.00010 [hep-th]} \BibitemShut
  {NoStop}%
\bibitem [{\citenamefont {Rudelius}(2023)}]{Rudelius:2023mjy}%
  \BibitemOpen
  \bibfield  {author} {\bibinfo {author} {\bibfnamefont {T.}~\bibnamefont
  {Rudelius}},\ }\href {\doibase 10.1007/JHEP09(2023)130} {\bibfield  {journal}
  {\bibinfo  {journal} {JHEP}\ }\textbf {\bibinfo {volume} {09}},\ \bibinfo
  {pages} {130} (\bibinfo {year} {2023})},\ \Eprint
  {http://arxiv.org/abs/2303.12103} {arXiv:2303.12103 [hep-th]} \BibitemShut
  {NoStop}%
\bibitem [{\citenamefont {Lee}\ \emph {et~al.}(2018)\citenamefont {Lee},
  \citenamefont {Lerche},\ and\ \citenamefont {Weigand}}]{Lee:2018urn}%
  \BibitemOpen
  \bibfield  {author} {\bibinfo {author} {\bibfnamefont {S.-J.}\ \bibnamefont
  {Lee}}, \bibinfo {author} {\bibfnamefont {W.}~\bibnamefont {Lerche}}, \ and\
  \bibinfo {author} {\bibfnamefont {T.}~\bibnamefont {Weigand}},\ }\href
  {\doibase 10.1007/JHEP10(2018)164} {\bibfield  {journal} {\bibinfo  {journal}
  {JHEP}\ }\textbf {\bibinfo {volume} {10}},\ \bibinfo {pages} {164} (\bibinfo
  {year} {2018})},\ \Eprint {http://arxiv.org/abs/1808.05958} {arXiv:1808.05958
  [hep-th]} \BibitemShut {NoStop}%
\bibitem [{\citenamefont {Lee}\ \emph {et~al.}(2022{\natexlab{b}})\citenamefont
  {Lee}, \citenamefont {Lerche},\ and\ \citenamefont {Weigand}}]{Lee:2019xtm}%
  \BibitemOpen
  \bibfield  {author} {\bibinfo {author} {\bibfnamefont {S.-J.}\ \bibnamefont
  {Lee}}, \bibinfo {author} {\bibfnamefont {W.}~\bibnamefont {Lerche}}, \ and\
  \bibinfo {author} {\bibfnamefont {T.}~\bibnamefont {Weigand}},\ }\href
  {\doibase 10.1007/JHEP02(2022)096} {\bibfield  {journal} {\bibinfo  {journal}
  {JHEP}\ }\textbf {\bibinfo {volume} {02}},\ \bibinfo {pages} {096} (\bibinfo
  {year} {2022}{\natexlab{b}})},\ \Eprint {http://arxiv.org/abs/1904.06344}
  {arXiv:1904.06344 [hep-th]} \BibitemShut {NoStop}%
\bibitem [{\citenamefont {Lanza}\ \emph
  {et~al.}(2021{\natexlab{b}})\citenamefont {Lanza}, \citenamefont
  {Marchesano}, \citenamefont {Martucci},\ and\ \citenamefont
  {Valenzuela}}]{Lanza:2021udy}%
  \BibitemOpen
  \bibfield  {author} {\bibinfo {author} {\bibfnamefont {S.}~\bibnamefont
  {Lanza}}, \bibinfo {author} {\bibfnamefont {F.}~\bibnamefont {Marchesano}},
  \bibinfo {author} {\bibfnamefont {L.}~\bibnamefont {Martucci}}, \ and\
  \bibinfo {author} {\bibfnamefont {I.}~\bibnamefont {Valenzuela}},\ }\href
  {\doibase 10.1007/JHEP09(2021)197} {\bibfield  {journal} {\bibinfo  {journal}
  {JHEP}\ }\textbf {\bibinfo {volume} {09}},\ \bibinfo {pages} {197} (\bibinfo
  {year} {2021}{\natexlab{b}})},\ \Eprint {http://arxiv.org/abs/2104.05726}
  {arXiv:2104.05726 [hep-th]} \BibitemShut {NoStop}%
\bibitem [{\citenamefont {Baume}\ \emph {et~al.}(2020)\citenamefont {Baume},
  \citenamefont {Marchesano},\ and\ \citenamefont {Wiesner}}]{Baume:2019sry}%
  \BibitemOpen
  \bibfield  {author} {\bibinfo {author} {\bibfnamefont {F.}~\bibnamefont
  {Baume}}, \bibinfo {author} {\bibfnamefont {F.}~\bibnamefont {Marchesano}}, \
  and\ \bibinfo {author} {\bibfnamefont {M.}~\bibnamefont {Wiesner}},\ }\href
  {\doibase 10.1007/JHEP04(2020)174} {\bibfield  {journal} {\bibinfo  {journal}
  {JHEP}\ }\textbf {\bibinfo {volume} {04}},\ \bibinfo {pages} {174} (\bibinfo
  {year} {2020})},\ \Eprint {http://arxiv.org/abs/1912.02218} {arXiv:1912.02218
  [hep-th]} \BibitemShut {NoStop}%
\bibitem [{\citenamefont {Calder\'on-Infante}\ \emph
  {et~al.}(2023)\citenamefont {Calder\'on-Infante}, \citenamefont {Castellano},
  \citenamefont {Herr\'aez},\ and\ \citenamefont
  {Ib\'a\~nez}}]{Calderon-Infante:2023ler}%
  \BibitemOpen
  \bibfield  {author} {\bibinfo {author} {\bibfnamefont {J.}~\bibnamefont
  {Calder\'on-Infante}}, \bibinfo {author} {\bibfnamefont {A.}~\bibnamefont
  {Castellano}}, \bibinfo {author} {\bibfnamefont {A.}~\bibnamefont
  {Herr\'aez}}, \ and\ \bibinfo {author} {\bibfnamefont {L.~E.}\ \bibnamefont
  {Ib\'a\~nez}},\ }\href@noop {} {\  (\bibinfo {year} {2023})},\ \Eprint
  {http://arxiv.org/abs/2306.16450} {arXiv:2306.16450 [hep-th]} \BibitemShut
  {NoStop}%
\bibitem [{\citenamefont {Castellano}\ \emph
  {et~al.}(2023{\natexlab{a}})\citenamefont {Castellano}, \citenamefont
  {Ruiz},\ and\ \citenamefont {Valenzuela}}]{Castellano:2023stg}%
  \BibitemOpen
  \bibfield  {author} {\bibinfo {author} {\bibfnamefont {A.}~\bibnamefont
  {Castellano}}, \bibinfo {author} {\bibfnamefont {I.}~\bibnamefont {Ruiz}}, \
  and\ \bibinfo {author} {\bibfnamefont {I.}~\bibnamefont {Valenzuela}},\
  }\href@noop {} {\  (\bibinfo {year} {2023}{\natexlab{a}})},\ \Eprint
  {http://arxiv.org/abs/2311.01501} {arXiv:2311.01501 [hep-th]} \BibitemShut
  {NoStop}%
\bibitem [{\citenamefont {Castellano}\ \emph
  {et~al.}(2023{\natexlab{b}})\citenamefont {Castellano}, \citenamefont
  {Ruiz},\ and\ \citenamefont {Valenzuela}}]{Castellano:2023jjt}%
  \BibitemOpen
  \bibfield  {author} {\bibinfo {author} {\bibfnamefont {A.}~\bibnamefont
  {Castellano}}, \bibinfo {author} {\bibfnamefont {I.}~\bibnamefont {Ruiz}}, \
  and\ \bibinfo {author} {\bibfnamefont {I.}~\bibnamefont {Valenzuela}},\
  }\href@noop {} {\  (\bibinfo {year} {2023}{\natexlab{b}})},\ \Eprint
  {http://arxiv.org/abs/2311.01536} {arXiv:2311.01536 [hep-th]} \BibitemShut
  {NoStop}%
\bibitem [{\citenamefont {Palti}(2017)}]{Palti:2017elp}%
  \BibitemOpen
  \bibfield  {author} {\bibinfo {author} {\bibfnamefont {E.}~\bibnamefont
  {Palti}},\ }\href {\doibase 10.1007/JHEP08(2017)034} {\bibfield  {journal}
  {\bibinfo  {journal} {JHEP}\ }\textbf {\bibinfo {volume} {08}},\ \bibinfo
  {pages} {034} (\bibinfo {year} {2017})},\ \Eprint
  {http://arxiv.org/abs/1705.04328} {arXiv:1705.04328 [hep-th]} \BibitemShut
  {NoStop}%
\bibitem [{\citenamefont {Calder\'on-Infante}\ \emph
  {et~al.}(2021)\citenamefont {Calder\'on-Infante}, \citenamefont {Uranga},\
  and\ \citenamefont {Valenzuela}}]{Calderon-Infante:2020dhm}%
  \BibitemOpen
  \bibfield  {author} {\bibinfo {author} {\bibfnamefont {J.}~\bibnamefont
  {Calder\'on-Infante}}, \bibinfo {author} {\bibfnamefont {A.~M.}\ \bibnamefont
  {Uranga}}, \ and\ \bibinfo {author} {\bibfnamefont {I.}~\bibnamefont
  {Valenzuela}},\ }\href {\doibase 10.1007/JHEP03(2021)299} {\bibfield
  {journal} {\bibinfo  {journal} {JHEP}\ }\textbf {\bibinfo {volume} {03}},\
  \bibinfo {pages} {299} (\bibinfo {year} {2021})},\ \Eprint
  {http://arxiv.org/abs/2012.00034} {arXiv:2012.00034 [hep-th]} \BibitemShut
  {NoStop}%
\bibitem [{\citenamefont {Etheredge}\ \emph
  {et~al.}(2023{\natexlab{a}})\citenamefont {Etheredge}, \citenamefont
  {Heidenreich}, \citenamefont {McNamara}, \citenamefont {Rudelius},
  \citenamefont {Ruiz},\ and\ \citenamefont {Valenzuela}}]{Etheredge:2023odp}%
  \BibitemOpen
  \bibfield  {author} {\bibinfo {author} {\bibfnamefont {M.}~\bibnamefont
  {Etheredge}}, \bibinfo {author} {\bibfnamefont {B.}~\bibnamefont
  {Heidenreich}}, \bibinfo {author} {\bibfnamefont {J.}~\bibnamefont
  {McNamara}}, \bibinfo {author} {\bibfnamefont {T.}~\bibnamefont {Rudelius}},
  \bibinfo {author} {\bibfnamefont {I.}~\bibnamefont {Ruiz}}, \ and\ \bibinfo
  {author} {\bibfnamefont {I.}~\bibnamefont {Valenzuela}},\ }\href@noop {} {\
  (\bibinfo {year} {2023}{\natexlab{a}})},\ \Eprint
  {http://arxiv.org/abs/2306.16440} {arXiv:2306.16440 [hep-th]} \BibitemShut
  {NoStop}%
\bibitem [{\citenamefont {Etheredge}(2023)}]{Etheredge:2023usk}%
  \BibitemOpen
  \bibfield  {author} {\bibinfo {author} {\bibfnamefont {M.}~\bibnamefont
  {Etheredge}},\ }\href@noop {} {\  (\bibinfo {year} {2023})},\ \Eprint
  {http://arxiv.org/abs/2308.01331} {arXiv:2308.01331 [hep-th]} \BibitemShut
  {NoStop}%
\bibitem [{\citenamefont {Etheredge}\ \emph
  {et~al.}(2023{\natexlab{b}})\citenamefont {Etheredge}, \citenamefont
  {Heidenreich}, \citenamefont {Rudelius}, \citenamefont {Ruiz},\ and\
  \citenamefont {Valenzuela}}]{Etheredge:Taxonomy}%
  \BibitemOpen
  \bibfield  {author} {\bibinfo {author} {\bibfnamefont {M.}~\bibnamefont
  {Etheredge}}, \bibinfo {author} {\bibfnamefont {B.}~\bibnamefont
  {Heidenreich}}, \bibinfo {author} {\bibfnamefont {T.}~\bibnamefont
  {Rudelius}}, \bibinfo {author} {\bibfnamefont {I.}~\bibnamefont {Ruiz}}, \
  and\ \bibinfo {author} {\bibfnamefont {I.}~\bibnamefont {Valenzuela}},\
  }\href@noop {} {\enquote {\bibinfo {title} {{Taxonomy of infinite distance
  limits}},}\ } (\bibinfo {year} {2023}{\natexlab{b}}),\ \bibinfo {note} {to
  appear}\BibitemShut {NoStop}%
\bibitem [{\citenamefont {Reece}(2019)}]{Reece:2018zvv}%
  \BibitemOpen
  \bibfield  {author} {\bibinfo {author} {\bibfnamefont {M.}~\bibnamefont
  {Reece}},\ }\href {\doibase 10.1007/JHEP07(2019)181} {\bibfield  {journal}
  {\bibinfo  {journal} {JHEP}\ }\textbf {\bibinfo {volume} {07}},\ \bibinfo
  {pages} {181} (\bibinfo {year} {2019})},\ \Eprint
  {http://arxiv.org/abs/1808.09966} {arXiv:1808.09966 [hep-th]} \BibitemShut
  {NoStop}%
\bibitem [{\citenamefont {Heidenreich}(2020)}]{Heidenreich:2020upe}%
  \BibitemOpen
  \bibfield  {author} {\bibinfo {author} {\bibfnamefont {B.}~\bibnamefont
  {Heidenreich}},\ }\href {\doibase 10.1007/JHEP11(2020)029} {\bibfield
  {journal} {\bibinfo  {journal} {JHEP}\ }\textbf {\bibinfo {volume} {11}},\
  \bibinfo {pages} {029} (\bibinfo {year} {2020})},\ \Eprint
  {http://arxiv.org/abs/2006.09378} {arXiv:2006.09378 [hep-th]} \BibitemShut
  {NoStop}%
\bibitem [{\citenamefont {Heidenreich}\ \emph {et~al.}(2019)\citenamefont
  {Heidenreich}, \citenamefont {Reece},\ and\ \citenamefont
  {Rudelius}}]{Heidenreich:2019zkl}%
  \BibitemOpen
  \bibfield  {author} {\bibinfo {author} {\bibfnamefont {B.}~\bibnamefont
  {Heidenreich}}, \bibinfo {author} {\bibfnamefont {M.}~\bibnamefont {Reece}},
  \ and\ \bibinfo {author} {\bibfnamefont {T.}~\bibnamefont {Rudelius}},\
  }\href {\doibase 10.1007/JHEP10(2019)055} {\bibfield  {journal} {\bibinfo
  {journal} {JHEP}\ }\textbf {\bibinfo {volume} {10}},\ \bibinfo {pages} {055}
  (\bibinfo {year} {2019})},\ \Eprint {http://arxiv.org/abs/1906.02206}
  {arXiv:1906.02206 [hep-th]} \BibitemShut {NoStop}%
\bibitem [{\citenamefont {Harlow}\ \emph {et~al.}(2023)\citenamefont {Harlow},
  \citenamefont {Heidenreich}, \citenamefont {Reece},\ and\ \citenamefont
  {Rudelius}}]{Harlow:2022ich}%
  \BibitemOpen
  \bibfield  {author} {\bibinfo {author} {\bibfnamefont {D.}~\bibnamefont
  {Harlow}}, \bibinfo {author} {\bibfnamefont {B.}~\bibnamefont {Heidenreich}},
  \bibinfo {author} {\bibfnamefont {M.}~\bibnamefont {Reece}}, \ and\ \bibinfo
  {author} {\bibfnamefont {T.}~\bibnamefont {Rudelius}},\ }\href {\doibase
  10.1103/RevModPhys.95.035003} {\bibfield  {journal} {\bibinfo  {journal}
  {Rev. Mod. Phys.}\ }\textbf {\bibinfo {volume} {95}},\ \bibinfo {pages}
  {035003} (\bibinfo {year} {2023})},\ \Eprint
  {http://arxiv.org/abs/2201.08380} {arXiv:2201.08380 [hep-th]} \BibitemShut
  {NoStop}%
\bibitem [{\citenamefont {Etheredge}\ and\ \citenamefont
  {Heidenreich}(2022)}]{Etheredge:2022rfl}%
  \BibitemOpen
  \bibfield  {author} {\bibinfo {author} {\bibfnamefont {M.}~\bibnamefont
  {Etheredge}}\ and\ \bibinfo {author} {\bibfnamefont {B.}~\bibnamefont
  {Heidenreich}},\ }\href@noop {} {\  (\bibinfo {year} {2022})},\ \Eprint
  {http://arxiv.org/abs/2211.09823} {arXiv:2211.09823 [hep-th]} \BibitemShut
  {NoStop}%
\bibitem [{\citenamefont {Etheredge}\ \emph {et~al.}(2024)\citenamefont
  {Etheredge}, \citenamefont {Heidenreich},\ and\ \citenamefont
  {Rudelius}}]{Etheredge:BraneSWGC}%
  \BibitemOpen
  \bibfield  {author} {\bibinfo {author} {\bibfnamefont {M.}~\bibnamefont
  {Etheredge}}, \bibinfo {author} {\bibfnamefont {B.}~\bibnamefont
  {Heidenreich}}, \ and\ \bibinfo {author} {\bibfnamefont {T.}~\bibnamefont
  {Rudelius}},\ }\href@noop {} {\enquote {\bibinfo {title} {{Scalar Weak
  Gravity Conjecture for Branes}},}\ } (\bibinfo {year} {2024}),\ \bibinfo
  {note} {to appear}\BibitemShut {NoStop}%
\end{thebibliography}%

\end{document}